\newtheorem{Proposition}{Proposition}
    \newcommand{\qh}{{\bf h}}
    \newcommand{\qx}{{\bf x}}
    \newcommand{\qz}{{\bf z}}
    \newcommand{\qH}{{\bf H}}
    \newcommand{\qI}{{\bf I}}
    \newcommand{\qQ}{{\bf Q}}
    \newcommand{\qW}{{\bf W}}
    \newcommand{\qX}{{\bf X}}
    \newcommand{\qY}{{\bf Y}}
    \newcommand{\qZ}{{\bf Z}}
    \newcommand{\qone}{{\bf 1}}
    \newcommand{\tc}{{\tilde{c}}}
    \newcommand{\tq}{{\tilde{q}}}
    \newcommand{\wtqY}{{\widetilde{\qY}}}
    \newcommand{\wtY}{{\widetilde{Y}}}
    \newcommand{\uqX}{{\underline{\qX}}}
    \newcommand{\bbC}{{\mathbb C}}
    \newcommand{\calN}{{\mathcal N}}
    \newcommand{\calT}{{\mathcal T}}
    \newcommand{\calqH}{\boldsymbol{\cal H}}
    \newcommand{\calqX}{\boldsymbol{\cal X}}
    \newcommand{\calqQ}{\boldsymbol{\cal Q}}
    \newcommand{\calqZ}{\boldsymbol{\cal Z}}
    \newcommand{\tr}{{\sf tr}}
    \newcommand{\Ex}{{\sf E}}
    \newcommand{\Extr}{\operatornamewithlimits{\sf Extr}}
    \newcommand{\mse}{{\sf mse}}
    \newcommand{\sfQ}{{\sf Q}}
    \newcommand{\rmd}{{\rm d}}
    \newcommand{\rmD}{{\rm D}}
    \newcommand{\sfd}{{\sf d}}
    \newcommand{\sfj}{{\sf j}}
    \newcommand{\sft}{{\sf t}}
    \newcommand{\sfB}{{\sf B}}
    \newcommand{\sfF}{{\sf F}}
    \newcommand{\sfH}{{\sf H}}
    \newcommand{\sfP}{{\sf P}}
    \newcommand{\sfX}{{\sf X}}
    \newcommand{\scP}{\mathscr{P}}
\begin{document}

\title{Joint Channel-and-Data Estimation for Large-MIMO Systems with Low-Precision ADCs}

\author{Chao-Kai~Wen,~Shi~Jin,~Kai-Kit~Wong,~Chang-Jen~Wang,~and~Gang~Wu\thanks{C.-K. Wen and C.-J. Wang are with the Institute of Communications Engineering, National Sun Yat-sen University, Kaohsiung, Taiwan (e-mail: $\rm chaokai.wen@mail.nsysu.edu.tw$). S. Jin is with the National Mobile Communications Research Laboratory, Southeast University, Nanjing 210096, P. R. China. K. Wong is with the Department of Electronic and Electrical Engineering, University College London, UK. G. Wu is with the National Key Laboratory of Science and Technology on Communications, University of Electronic Science and Technology of China, Chengdu 611731, P. R. China.
}
}


\maketitle

\begin{abstract}
The use of low precision (e.g., $1-3$ bits) analog-to-digital convenors (ADCs) in very large multiple-input multiple-output (MIMO) systems is a  technique to reduce cost and power consumption. In this context, nevertheless, it has been shown that the training duration is required to be {\em very large} just to obtain an acceptable channel state information (CSI) at the receiver. A possible solution to the quantized MIMO systems is joint channel-and-data (JCD) estimation. This paper first develops an analytical framework for studying the quantized MIMO system using JCD estimation. In particular, we use the Bayes-optimal inference for the JCD estimation and realize this estimator utilizing a recent technique based on approximate message passing. Large-system analysis based on the replica method is then adopted to derive the asymptotic performances of the JCD estimator. Results from simulations confirm our theoretical findings and reveal that the JCD estimator can provide a significant gain over conventional pilot-only schemes in the quantized MIMO system.
\end{abstract}


\section*{\sc I. Introduction}
Very large multiple-input multiple-output (MIMO) or ``massive MIMO'' systems \cite{Marzetta-10TW} are widely considered as a key technology for 5G wireless communications networks \cite{Larsson-14COMMag,Andrews-14JSAC}. Such systems promote the use of a very large number of antennas at the base station (BS) (e.g., hundreds or thousands) to serve a number of user terminals (e.g., tens or hundreds) in the same time-frequency resource. Nonetheless, the high dimensionality greatly increases hardware cost and power consumption. This motivates the study of MIMO systems with very low precision (e.g., $1-3$ bits) analog-to-digital convenors (ADCs).

\begin{figure}
\begin{center}
\resizebox{3.5in}{!}{%
\includegraphics*{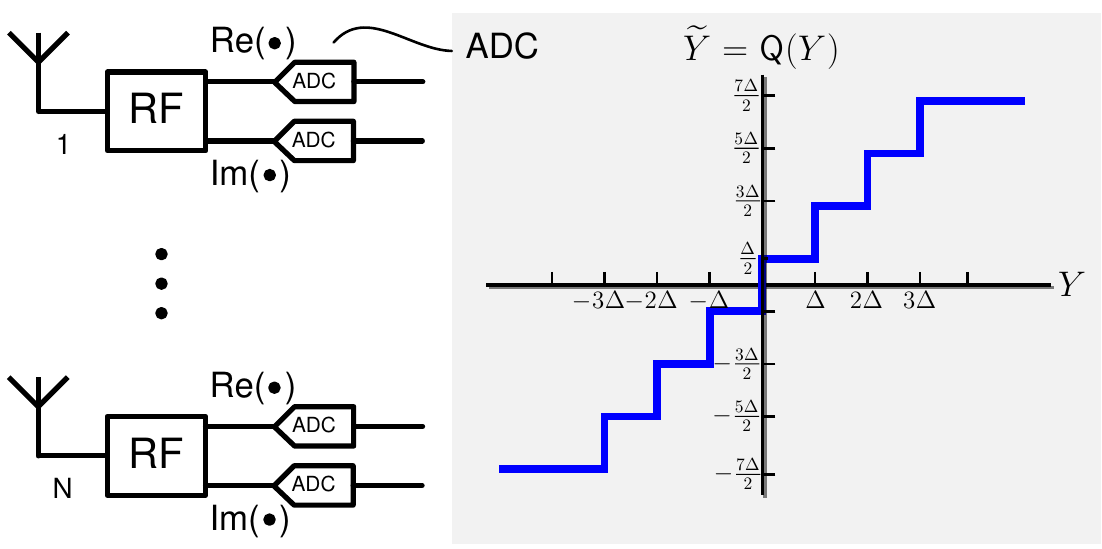} }%
\caption{The quantized MIMO system.}\label{fig:QuantizedMIMO}
\end{center}
\end{figure}

Several aspects of low precision ADCs have been studied in the literature for single-input single-output (SISO) channels \cite{Singh-09TCOM} and more recently MIMO channels \cite{Mo-14ArXiv} and references therein. In this paper, our focus is on signal detection for the quantized MIMO systems where each receiving antenna is equipped with a very low precision ADC. Prior work in this direction covered code-division multiple-access (CDMA) systems \cite{Nakamura-08ISITA}, massive MIMO systems \cite{Nakamura-08ISITA,Mezghani-10ISIT,Risi-14ArXiv,Mo-14ArXiv,Wang-15TWCOM}, distributed antenna systems (DASs) \cite{Choi-14ArXiv}, and compressed sensing \cite{Xu-14JSM}. However, most previous work assumed perfect channel state information at the receiver (CSIR). In particular, \cite{Risi-14ArXiv} revealed that in a MIMO system with one-bit ADC, to achieve the same performance as the full CSI case we have to use a very long training sequence (above $50$ times the number of users). Clearly, the assumption of perfect CSIR becomes quite controversial particularly for quantized MIMO systems. The requirement of long training sequence motivates us to consider joint channel-and-data (JCD) estimation in which estimated
payload data are utilized to aid channel estimation.
A major advantage of JCD estimation is that relatively few pilot symbols
are required to achieve the equalization channel and data estimation performances.

Although performance enhancement by using this technique is expected, we are not aware of any study for the \emph{quantized} MIMO systems using JCD estimation.\footnote{In the context of \emph{unquantized} MIMO system, several aspects of the JCD estimation have already been widely studied, see e.g., \cite{Takeuchi-13TIT,Ma-14TSP,Wen-15ICC}.} In this paper, we take the important first step to analyze the achievable performance of the quantized MIMO system using JCD estimation. To this end, we use the Bayes-optimal inference for JCD estimation as this approach provides the minimal mean-square-error (MSE) with respect to (w.r.t.) the channels and payload data. However, the complexity for carrying out the Bayes-optimal JCD estimator appears prohibitive. To address this issue, we use a variant version of belief propagation (BP) to approximate the marginal distributions of each data and channel components. In particular, we modify the bilinear generalized approximate message passing (BiG-AMP) scheme in \cite{Parker-14TSP} and adapt it to the quantized MIMO system. We refer to the proposed method as GAMP-based JCD. Furthermore, by applying large-system analysis based on the replica method from statistical physics, we provide the theoretical performances for the Bayes-optimal JCD estimator.\footnote{In this paper, the Bayes-optimal JCD estimator is regarded as the \emph{theoretical} optimal estimator, while the GAMP-based JCD algorithm can be thought of as a \emph{practical} algorithm to approximate the theoretical optimal estimator.}  Simulations are used to verify the efficiency of the proposed algorithm and the accuracy of our analysis.

\section*{\sc II. System Model}
We consider a block-fading uplink channel with $K$ transmit antennas and $N$ receive antennas, in which the channel remains constant over $T$ consecutive symbol-intervals (i.e., a block). The received signal $\qY \in \bbC^{N \times T}$ over the block interval can be written in matrix form as
\begin{equation} \label{eq:sys}
    \qY = \frac{1}{\sqrt{K}}\qH \qX + \qW = \qZ + \qW,
\end{equation}
where $\qX \in \bbC^{K \times T}$ denotes the transmit symbols in the block,  $\qH \in \bbC^{N \times K}$ is the matrix containing the fading coefficients associated to the channels between the transmit antennas and the receive antennas, $\qW \in \bbC^{N \times T}$ represents the additive temporally and spatially white Gaussian noise with zero mean and element-wise variance $\sigma_{w}^2$, and we define $\qZ \triangleq \frac{1}{\sqrt{K}}\qH \qX$.

In the quantized MIMO system, as shown in Figure \ref{fig:QuantizedMIMO}, each received signal component $Y^{ij}$, $1 \leq i \leq N$, $1 \leq j \leq T$ are quantized separately into a finite set of prescribed values by a $\sfB$-bit quantizer $\sfQ_{c}$. The resulting quantized signals can read
\begin{equation} \label{eq:qsys}
    \wtqY = {\sfQ_{c}\left(\qZ + \qW\right)}.
\end{equation}
Specifically, each complex-valued quantizer $\sfQ_{c}(\cdot)$ is defined as $\widetilde{Y}^{ij} = \sfQ_{c}(Y^{ij}) \triangleq \sfQ({\rm Re}\{ Y^{ij}\}  ) + \sfj
\sfQ({\rm Im}\{ Y^{ij} \} )$, i.e., the real and imaginary parts are quantized separately. The real-valued quantizer $\sfQ$ maps a real-valued input to one of the $2^\sfB$ bins, which are characterized by the set of ${2^\sfB-1}$ thresholds $[r_1,r_2,\dots,r_{2^{\sfB}-1}]$, such that $ -\infty < r_1 < r_2 < \cdots < r_{2^{\sfB}-1} < \infty$. For notational consistence, we define $r_0 = -\infty$ and $r_{2^{\sfB}} = \infty$. The output $\widetilde{Y}$ is assigned a value in $(r_{b-1}, \,r_{b}]$ when the quantizer input $Y$ falls in the interval $(r_{b-1}, \,r_{b}]$ (namely, the $b$-th bin). For example, the threshold of a typical uniform quantizer with the quantization step-size $\Delta$ is given by
\begin{equation}
    r_b = {\left( -2^{\sfB-1} + b \right)} \Delta, \mbox{~~for~~} b=1,\dots,2^{\sfB}-1,
\end{equation}
and the quantization output is assigned the value ${r_b - \frac{\Delta}{2}}$ when the input falls in the $b$-th bin. Figure \ref{fig:QuantizedMIMO} shows an example of the $3$-bit uniform quantizer.

Since the channel matrix $\qH$ needs to be estimated at the receiver, we make the first $T_{1}$ symbols of the block of $T$ symbols serve as pilot sequences. The remaining $T_2= T-T_1$ symbols are used for data transmissions. This setting is equivalent to partitioning $\qX$ as $\qX = [\qX_{1} \, \qX_{2}]$ with $\qX_{1}  \in \bbC^{K\times T_1}$ and $\qX_{2}  \in \bbC^{K \times T_2}$. The training and data phases are referred to as $\sft$-phase and $\sfd$-phase, respectively. We assume that the matrix $\qX_{1}$ (or $\qX_{2}$) is composed of independent and identically distributed (i.i.d.) random variables $\sfX_1$ ($\sfX_{2}$) generated from a known probability distribution $\sfP_{\sfX_{1}}$ (or $\sfP_{\sfX_{2}}$), i.e.,
\begin{equation} \label{eq:proX}
 \sfP_{\sfX}(\qX) = \sfP_{\sfX_{1}}(\qX_{1}) \sfP_{\sfX_{2}}(\qX_{2})
\end{equation}
with ${\sfP_{\sfX_{1}}(\qX_{1}) = \prod_{i,j} \sfP_{\sfX_{1}}(X_{1}^{ij})}$, $\sfP_{\sfX_{2}}(\qX_{2}) = \prod_{i,j} \sfP_{\sfX_{2}}(X_{2}^{ij})$.

Since the pilot and data symbols should appear on constellation points uniformly, the ensemble averages of $\{X_1^{ij}\}$ and $\{X_2^{ij}\}$ are assumed to be zero. In addition, we let $\sigma_{x_1}^2$ and $\sigma_{x_2}^2$ be the transmit powers during the $\sft$-phase and $\sfd$-phase, respectively, i.e., $\Ex\{ |\sfX_1^{ij}|^2\} = \sigma_{x_1}^2$ and $\Ex\{|\sfX_2^{ij}|^2\} = \sigma_{x_2}^2$.

Similarly, we assume that each entry $H^{ij}$ is drawn from the complex Gaussian distribution $\calN_{\bbC}(0,\sigma_{h}^2)$, where $\sigma_{h}^2$ indicates the large scale fading factor.\footnote{For ease of notation, we consider the case where all the transmits have the same large-scale fading factor but our results can be easily extended.} Let $\sfP_{\sfH}(H^{ij}) \equiv \calN_{\bbC}(0,\sigma_{h}^2)$. Then we have
\begin{equation} \label{eq:proH}
 \sfP_{\sfH}(\qH) = \prod_{ij} \sfP_{\sfH}(H^{ij}).
\end{equation}

\section*{\sc III. JCD Estimation}
Our focus is on the setting where the receiver knows the distributions of $\qH$ and $\qX$ but not their realizations. In the conventional pilot-only scheme, the receiver first uses $\qY_1$ and $\qX_1$ to generate an estimate of $\qH$ and then uses the estimated channel for estimating the data $\qX_2$ from $\qY_2$ \cite{Risi-14ArXiv}. In contrast to the pilot-only scheme, we consider JCD estimation, where the BS estimates both $\qH$ and $\qX_2$ from $\widetilde{\qY}$ given $\qX_1$. We treat the problem in the framework of Bayesian inference \cite{Poor-94BOOK}.

To this end, we first define the likelihood, which is the distribution of the received signals under (\ref{eq:qsys}) conditional on the unknown parameters, as:
\begin{equation} \label{eq:likelihood}
    \sfP(\wtqY|\qH,\qX) = \prod_{i=1}^{N} \prod_{j=1}^{T} {\sfP_{\sf out}\left( \wtY^{ij} \Big| Z^{ij}\right)},
\end{equation}
where
\begin{multline} \label{eq:likelihood_each}
    \sfP_{\sf out}\left( \wtY \Big| Z \right)
    = \left( \frac{1}{\sqrt{ \pi \sigma_{w}^2}} \int_{r_{b-1}}^{r_b} \rmd y e^{-\frac{(y - {\rm Re}(Z) )^2}{\sigma_{w}^2}} \right)  \\
      \times \left( \frac{1}{\sqrt{ \pi \sigma_{w}^2}} \int_{r_{b'-1}}^{r_{b'}} \rmd y e^{-\frac{(y - {\rm Im}(Z) )^2}{\sigma_{w}^2}} \right)
\end{multline}
as ${\rm Re}(\wtY)$ and ${\rm Im}(\wtY)$ fall in the $b$-th and the $b'$-th bins, respectively, i.e, ${\rm Re}(\wtY) = r_b$ and ${\rm Im}(\wtY) = r_{b'}$. Let $\Phi(x) \triangleq \int_{-\infty}^{x} \rmD z$ with $\rmD z \triangleq \frac{1}{\sqrt{2\pi}}e^{-\frac{z^2}{2}} \rmd z$. Then (\ref{eq:likelihood_each}) has the following closed-form expression
\begin{equation}
    \sfP_{\sf out}\left( \wtY \Big| Z \right)
    = \Psi_b\big({\rm Re}(Z)\big) \Psi_{b'}\big({\rm Im}(Z)\big),
\end{equation}
where
\begin{equation}
    \Psi_b(x) \triangleq \Phi\left( \frac{\sqrt{2}(r_b- x)}{\sigma_{w}}\right) - \Phi\left( \frac{\sqrt{2}(r_{b-1}- x)}{\sigma_{w}}\right).
\end{equation}
The prior distributions of $\qH$ and $\qX$ are given by (\ref{eq:proH}) and (\ref{eq:proX}), respectively. Then the posterior probability can be computed according to Bayes' rule:
\begin{equation} \label{eq:posteriorPr}
    \sfP(\qH,\qX|\qY) = \frac{\sfP(\wtqY|\qH,\qX)\sfP_{\sfH}(\qH)\sfP_{\sfX}(\qX)}{\sfP(\wtqY)},
\end{equation}
where $\sfP(\wtqY) = \int_{\qH} \int_{\qX} \rmd\qH\rmd\qX \sfP(\wtqY|\qH,\qX)\sfP_{\sfH}(\qH)\sfP_{\sfX}(\qX)$ is the marginal likelihood.

Given the posterior probability, an estimate for $X^{ij}$ can be obtained by the posterior mean
\begin{equation} \label{eq:estX}
    \widehat{X}^{ij} = \int \rmd X^{ij} \scP(X^{ij}) X^{ij},
\end{equation}
where
\begin{equation}
    \scP(X^{ij}) = \int_{\qH} \int_{\qX\setminus X^{ij}} \rmd\qH \rmd\qX \, \sfP(\qH,\qX|\wtqY)
\end{equation}
is the marginal posterior probability of $X^{ij}$. Here, the notation $\int_{\qX\setminus X^{ij}} \rmd\qX $ denotes the integration over all the variables in $\qX$ except for $X^{ij}$. If the MSE of an estimate $\hat{\qX}$ w.r.t.~$\qX$ is defined as
\begin{equation} \label{eq:mseX}
    \mse(\qX_t|\wtqY) = \int_{\qH}\int_{\qX}\rmd\qH\rmd\qX \sfP(\qH,\qX|\wtqY) \| \hat{\qX}_t - \qX_t\|_{\sfF}^2,
\end{equation}
for ${t=1,2}$, then the posterior mean estimator (\ref{eq:estX}) gives the minimum MSE (MMSE) \cite{Poor-94BOOK}. Notice that given a \emph{known} pilot matrix $\uqX_1$, i.e., ${\sfP_{\sfX_{1}}(\qX_{1}) = \delta(\qX_{1}-\uqX_{1})}$, we can easily obtain $\widehat{X}_1^{ij} = \underline{X}_1^{ij}$ from (\ref{eq:estX}). In this case, we have $\mse(\qX_1|\wtqY) = 0$.

Similarly, the Bayes estimate of $H^{ij}$ is given by
\begin{equation} \label{eq:estH}
    \widehat{H}^{ij} = \int \rmd H^{ij} \scP(H^{ij}) H^{ij}
\end{equation}
where ${\scP(H^{ij}) = \int_{\qH\setminus H^{ij}} \int_{\qX} \rmd\qH \rmd\qX  \sfP(\qH,\qX|\wtqY)}$ denotes the marginal posterior probability of $H^{ij}$. The estimate $\widehat{H}$ also minimizes the MSE
\begin{equation} \label{eq:mseH}
    \mse(\qH|\wtqY) = \int_{\qH}\int_{\qX}\rmd\qH\rmd\qX \, \sfP(\qH,\qX|\wtqY) \| \hat{\qH} - \qH\|_{\sfF}^2.
\end{equation}
Hereafter, we will refer to (\ref{eq:estX}) and (\ref{eq:estH}) as the Bayes-optimal estimator.

Although the Bayes-optimal estimator provides the MMSE estimates, direct computations of (\ref{eq:estX}) and (\ref{eq:estH}) are intractable due to high-denominational integrals involved in the marginal posteriors $\scP(X^{ij})$ and $\scP(H^{ij})$. In \cite{Pearl-1988BOOK}, BP provides a practical alternative to approximate these marginal posteriors. In the recent compressed sensing literature, the Bayesian framework in combination with a BP algorithm has given rise to the so-called approximate message passing (AMP) algorithm \cite{Donoho-09PNAS} and the generalized AMP (GAMP)  \cite{Rangan-10ArXiv}. Applying this development to our context of the MIMO system means that when $\qH$ is perfectly known, GAMP can provide a tractable way to approximate the marginal posteriors $\scP(X^{ij})$'s. Remarkably, it has proved that the approximations become exact in the large-system limit for dense matrix $\qH$ with sub-Gaussian entries. More recently, Parker {\em et al.}~in \cite{Parker-14TSP} applied the same strategy of GAMP to the problem of reconstructing matrices from bilinear noisy observations (i.e., reconstructing $\qH$ and $\qX$ from $\qY$), which is referred to as bilinear GAMP (BiG-AMP). The BiG-AMP scheme can be applied to tackle the Bayes-optimal JCD estimator and we can adapt it to be used in the quantized MIMO setting. We call the developed algorithm GAMP-based JCD algorithm. Due to space limitations, we remove details of the algorithm development in this paper while we will show its simulation performances later in Section V.

\section*{\sc IV. Performance Analysis}
knowing the theoretical lower bound of the estimate is useful to assess any developed algorithm. Therefore, in this section, our objective is to derive analytical results for the average MSEs of $\qX_2$ and $\qH$ for the Bayes-optimal JCD estimator, i.e., $\mse_{X_{t}} \triangleq\Ex\{ \mse(\qX_t|\wtqY) \}$ and $\mse_{H} \triangleq \Ex\{ \mse(\qH|\wtqY) \}$. Our analysis investigates the high-dimensional regime where $N, K, T \to \infty$ but the ratios $N/K= \alpha$, $T/K = \beta$, $T_t/K = \beta_t$, for $t=1,2$ are fixed and finite. For convenience, we simply use $K \rightarrow \infty$ (or refer to as the large-system limit) to denote this high-dimension limit. Following the argument of \cite{Krzakala-13ISIT,Kabashima-14ArXiv}, it can be shown that $\mse_{X_{t}}$ and $\mse_{H}$ are saddle points of the average free entropy
\begin{equation}\label{eq:FreeEn}
    \Phi \triangleq \frac{1}{K^2}\Ex_{\wtqY}\left\{\log \sfP(\wtqY) \right\},
\end{equation}
where $\sfP(\wtqY)$ denotes the marginal likelihood in (\ref{eq:posteriorPr}), namely the partition function. The major difficulty in computing (\ref{eq:FreeEn}) is the expectation of the logarithm of $\sfP(\wtqY)$, which, nevertheless, can be facilitated by rewriting $\Phi$ as \cite{Nishimori-01BOOK}
\begin{equation}\label{eq:LimF}
\Phi = \frac{1}{K^2} \lim_{\tau\rightarrow 0}\frac{\partial}{\partial \tau}\log \Ex_{\wtqY}\left\{\sfP^{\tau}(\wtqY)\right\}.
\end{equation}
The expectation operator is moved inside the log-function. We first evaluate $\Ex_{\wtqY}\{\sfP^{\tau}(\wtqY)\}$ for an integer-valued $\tau$, and then generalize the result to any positive real number $\tau$. This technique is called the replica method, and has been widely adopted in the field of statistical physics \cite{Nishimori-01BOOK} and information theory literature, e.g.,  \cite{Tanaka-02IT,Moustakas-03TIT,Guo-05IT,Muller-03TSP,Wen-07IT,Hatabu-09PRE,Takeuchi-13TIT,Girnyk-14TWC}. Under the assumption of replica
symmetry (RS), the following results are obtained.

\begin{Proposition} \label{Pro_MSE}
As $K \to \infty$, the asymptotic MSEs w.r.t.~$\qX_{t}$ and $\qH$ are associated with the MSEs for the scalar Gaussian channels:
\begin{align}
 Y_{\tq_{H}} &= \sqrt{\tq_{H}} H + W_{H}, \label{eq:scalCh_H} \\
 Y_{\tq_{{X_{t}}}} &=  \sqrt{\tq_{{X_{t}}}} X_{t} + W_{X}, \label{eq:scalCh_X}
\end{align}
where $W_{H},W_{X} \sim \calN_{\bbC}(0,1)$ are independent of $H \sim P_{H}$ and $X_{t} \sim  P_{X_{t}}$. Here, the parameters $\tq_{H}$ and $\tq_{{X_{t}}}$ are the solutions to the set of fixed-point equations
\begin{subequations} \label{eq:fxiedPoints}
\begin{align}
 \tq_{H} & = \sum_{t=1}^{2} \beta_t q_{X_{t}} \chi_t, \label{eq:asyVarH} \\
 \tq_{X_{t}} & = \alpha q_{H}  \chi_t,
 \label{eq:asyVarX} \\
 q_{H} & = c_{H} - \mse_{H}, \label{eq:qH} \\
 q_{X_t} & = c_{X_{t}} - \mse_{X_{t}}. \label{eq:qX}
\end{align}
\end{subequations}
where $\mse_{H} = \Ex\{|H - \Ex\{ H | Y_{\tq_{H}} \} |^2\}$ and $\mse_{X_{t}} = \Ex\{|X_{t}- \Ex\{ X_{t} | Y_{\tq_{{X_{t}}}} \}|^2\}$ are the asymptotic MSEs w.r.t.~$\qX_{t}$ and $\qH$, respectively, $c_{X_{t}} \triangleq \Ex\{|X_{t}|^2\} = \sigma_{x_t}^2$, $c_{H} \triangleq \Ex\{|H|^2\} = \sigma_{h}^2$, and
\begin{equation}
 \chi_t \triangleq \sum_{b=1}^{2^{\sfB}-1} \int \rmD v \frac{\Big(\Psi'_{b}\left(\sqrt{q_{H}q_{X_{t}}} v \right)\Big)^2}{\Psi_{b}\left(
 \sqrt{q_{H}q_{X_{t}}} v \right)},
\end{equation}
with
\begin{multline}\label{eq:Pout}
 \Psi_{b}(V_t)
\triangleq \Phi\left(\frac{ \sqrt{2}r_{b}-V_t}{\sqrt{\sigma_{w}^2 + c_{H}c_{X_t}-q_{H}q_{X_t}}} \, \right)\\
\quad - \Phi\left(\frac{ \sqrt{2}r_{b-1}-V_t}{\sqrt{\sigma_{w}^2 + c_{H}c_{X_t}-q_{H}q_{X_t}}} \, \right),
\end{multline}
and
\begin{multline}
\Psi'_{b}(V_t)\triangleq \frac{\partial \Psi_{b}(V_t)}{\partial V_t}\\
=  \frac{ e^{-\frac{(\sqrt{2}r_{b}-V_t)^2}{2(\sigma_{w}^2 + c_{H}c_{X_t}-q_{H}q_{X_t})}} -e^{-\frac{(\sqrt{2}r_{b-1}-V_t)^2}{2(\sigma_{w}^2 + c_{H}c_{X_t}-q_{H}q_{X_t})}}}{\sqrt{2 \pi (\sigma_{w}^2 + c_{H}c_{X_t}-q_{H}q_{X_t})}}.
\end{multline}

In the $\sft$-phase, i.e., $t=1$, the pilot matrix $\qX_1$ is known. Thus, we substitute $\mse_{X_1} = 0$ into the above expressions.
\end{Proposition}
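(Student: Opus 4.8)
The plan is to evaluate the average free entropy $\Phi$ in (\ref{eq:FreeEn}) through the replica identity (\ref{eq:LimF}) and to show that its replica-symmetric (RS) saddle point reproduces the decoupled scalar channels (\ref{eq:scalCh_H})--(\ref{eq:scalCh_X}) together with the fixed-point system (\ref{eq:fxiedPoints}). First I would form the replicated partition function $\Ex_{\wtqY}\{\sfP^{\tau}(\wtqY)\}$ for integer $\tau$ by introducing $\tau+1$ independent copies of the unknowns: a genuine pair $(\qH^{(0)},\qX^{(0)})$ drawn from the priors (\ref{eq:proH})--(\ref{eq:proX}) which generates $\wtqY$, and $\tau$ ``replica'' pairs $(\qH^{(a)},\qX^{(a)})$, $a=1,\dots,\tau$, drawn from the same priors (the genuine copy accounts for the leading $\sfP(\wtqY)$ factor inside $\Ex_{\wtqY}$). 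Integrating $\wtqY$ out then couples the replicas only through the per-entry output kernel, leaving $\prod_{i,j}\int\rmd\wtY\prod_{a=0}^{\tau}\sfP_{\sf out}(\wtY|Z^{(a),ij})$ with $Z^{(a),ij}=\frac{1}{\sqrt{K}}\sum_{k}H^{(a),ik}X^{(a),kj}$.

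Next I would pass to the large-system limit. For each $(i,j)$ the collection $(Z^{(0),ij},\dots,Z^{(\tau),ij})$ is a $1/\sqrt{K}$-normalized sum over the $K$ shared indices $k$, so by the central limit theorem it is asymptotically jointly complex-Gaussian, with covariance controlled by the channel overlaps $Q_{H}^{ab}=\frac{1}{K}\sum_{k}H^{(a),ik}\overline{H^{(b),ik}}$ and the phase-dependent data overlaps $Q_{X_t}^{ab}=\frac{1}{K}\sum_{k}X^{(a),kj}\overline{X^{(b),kj}}$. I would insert these definitions as integrals over the order parameters and their conjugates (a Hubbard--Stratonovich / Fourier representation), which factorizes the remaining integral into a channel ``prior'' part, a data ``prior'' part (one per phase), and an ``output'' part built from the Gaussian-averaged kernel $\int\rmd\wtY\prod_{a}\sfP_{\sf out}(\wtY|Z^{(a)})$. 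Evaluating everything by Laplace's method as $K\to\infty$ turns $\Phi$ into an extremization over the overlap matrices.

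The RS ansatz then collapses the matrices to scalars: diagonal entries $c_H,c_{X_t}$ fixed by the priors and common off-diagonal entries $q_H,q_{X_t}$. A key simplification is the Bayes-optimal (Nishimori) structure, under which the genuine copy is statistically indistinguishable from the replicas, so the conjugate parameters collapse and the effective channel gains of the scalar models are exactly the $\tq_H,\tq_{X_t}$ of the statement. Carrying out $\lim_{\tau\to0}\partial_\tau$ and setting the saddle-point derivatives to zero yields (\ref{eq:asyVarH})--(\ref{eq:qX}): the channel gain $\tq_H$ aggregates the two phases with weights $\beta_t$ because the sum over columns $j$ splits into $T_1$ pilot and $T_2$ data columns, while $\tq_{X_t}=\alpha q_H\chi_t$ comes from the sum over the $N$ rows. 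The susceptibility $\chi_t$ is precisely the Fisher-information-type derivative of the output part: differentiating the Gaussian-averaged log-kernel produces the per-bin terms $(\Psi'_b)^2/\Psi_b$, and the effective variance $\sigma_w^2+c_Hc_{X_t}-q_Hq_{X_t}$ inside $\Psi_b$ in (\ref{eq:Pout}) arises from the residual (non-condensed) part of the $Z$-covariance. Finally I would recognize (\ref{eq:qH})--(\ref{eq:qX}) as the MMSE relations of the scalar Gaussian channels (\ref{eq:scalCh_H})--(\ref{eq:scalCh_X}), i.e.\ $q=c-\mse$ with $\mse$ the posterior variance of a single component observed through $\sqrt{\tq}\,(\cdot)+W$; for the $\sft$-phase the pilots are known, so $\mse_{X_1}=0$ is imposed directly.

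The step I expect to be the main obstacle is the decoupling of the bilinear term. Because both $\qH$ and $\qX$ are unknown, the $Z$-covariance involves the entrywise product $\frac{1}{K}\sum_k H^{(a),ik}\overline{H^{(b),ik}}\,X^{(a),kj}\overline{X^{(b),kj}}$, which does not factor term-by-term into a channel overlap times a data overlap. Justifying the replacement of this quantity by the product $Q_H^{ab}Q_{X_t}^{ab}$ (hence the appearance of $\sqrt{q_Hq_{X_t}}$ and $c_Hc_{X_t}-q_Hq_{X_t}$) requires the self-averaging of the overlaps and a careful two-level saddle-point argument, which is exactly the delicate point that separates this bilinear analysis from the standard linear (known-$\qH$) GAMP replica computation; I would lean on the treatment in \cite{Krzakala-13ISIT,Kabashima-14ArXiv} to carry this through.
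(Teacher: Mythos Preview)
Your proposal is correct and follows essentially the same route as the paper's proof: form the $(\tau+1)$-replica partition function, introduce the overlap matrices $\qQ_H,\qQ_{X_t}$ via delta identities and their Fourier conjugates, use the CLT so that $\qz_{n,j}$ is Gaussian with covariance $Q_{Z_t}^{ab}=Q_H^{ab}Q_{X_t}^{ab}$, evaluate by the saddle-point method under the RS ansatz, and take $\tau\to 0$ to obtain (\ref{eq:fxiedPoints}) and the free entropy. The paper handles the bilinear decoupling step you flag in exactly the way you anticipate---it simply posits $Q_{Z_t}^{ab}=Q_H^{ab}Q_{X_t}^{ab}$ and defers the justification to \cite{Krzakala-13ISIT,Kabashima-14ArXiv}.
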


\begin{proof}
An outline proof is given in the appendix.
\end{proof}

The above result reveals that in the large-system limit, the performance of the quantized MIMO system employing the Bayes-optimal JCD estimator can be fully characterized by the equivalent scalar Gaussian channels (\ref{eq:scalCh_H}) and (\ref{eq:scalCh_X}). For example, the achievable rate under the separate decoding (SD) is the mutual information between $Y_{\tq_{{X_{t}}}}$ and $X_{t}$ for the scalar Gaussian channel (\ref{eq:scalCh_X}). Note that in contrast to joint detection and decoding, the SD involves the joint multiuser detection followed by a bank of independent decoders. Also, the corresponding MSE w.r.t. $\qH$ can be evaluated through the scalar Gaussian channel (\ref{eq:scalCh_H}). Specifically, if the signal is drawn from a quadrature phase shift keying (QPSK) constellation, the corresponding bit error rate (BER) reads
\begin{equation} \label{eq:BER}
    P_{e} = {\cal Q}\left(\sqrt{\tq_X}\right),
\end{equation}
where ${\cal Q}(x) \triangleq \int_{x}^{\infty}\rmD z$ is the Q~function, the MSE w.r.t. payload data $\qX_2$ is given by
\begin{equation} \label{eq:mseQPSK}
    \mse_{X_2} = 1-{\int \rmD z \tanh\left(\tq_{X_2}+\sqrt{\tq_{X_2}}z\right)},
\end{equation}
and the corresponding MSE w.r.t.~$\qH$ is $\mse_{H} = \frac{\sigma_{h}^2}{1+\sigma_{h}^2\tq_{H}}$.

If the channel matrix $\qH$ is perfectly known, the $\sft$-phase is not required so $\beta_2 = \beta$ and $\beta_1 = 0$. Since there is only one phase in $\qX$, we omit the phase indices $(t)$ from all the concerned parameters in this case. Because $\qH$ is perfectly known, we set $\mse_{H} =0$. Plugging this into (\ref{eq:qH}), we immediately obtain $q_{H} = c_{H} = \sigma_{h}^2$, which leads to $q_{H}q_{X} = c_{H} q_{X}$, and ${c_{H}c_{X}-q_{H}q_{X}} = {c_{H} (c_{X}-q_{X})} = {c_{H} \mse_{X}}$. It turns out that the equivalent signal-to-interference-plus-noise ratio (SINR) of the scalar Gaussian channel (\ref{eq:scalCh_X}) is given by
\begin{equation} \label{eq:effNoiseLevelX_PerfCh}
  \tq_X = \alpha c_{H} \left(\sum_{b=1}^{2^{\sfB}-1} \int \rmD v \frac{\left(\Psi'_{b}\left(\sqrt{c_{H} q_{X}} v \right)\right)^2}{\Psi_{b}\left(\sqrt{c_{H} q_{X}} v \right)} \right),
\end{equation}
and the asymptotic MSE w.r.t.~$\qX$ is given by $\mse_{X} = \Ex\{|X- \Ex\{ X | Y_{\tq_{{X}}} \}|^2\}$. If QPSK is used, the MSE in (\ref{eq:mseQPSK}) together with $\tq_X$ in (\ref{eq:effNoiseLevelX_PerfCh}) agree with \cite[(7) \& (8)]{Nakamura-08ISITA}. More precisely, in \cite{Nakamura-08ISITA}, the real-valued system with BPSK signal is considered. In this case, $\sqrt{2}r_{b}$ in our paper should be replaced by $r_{b}$.

\section*{\sc V. Numerical Results}

\begin{figure}
\begin{center}
\resizebox{3.65in}{!}{%
\includegraphics*{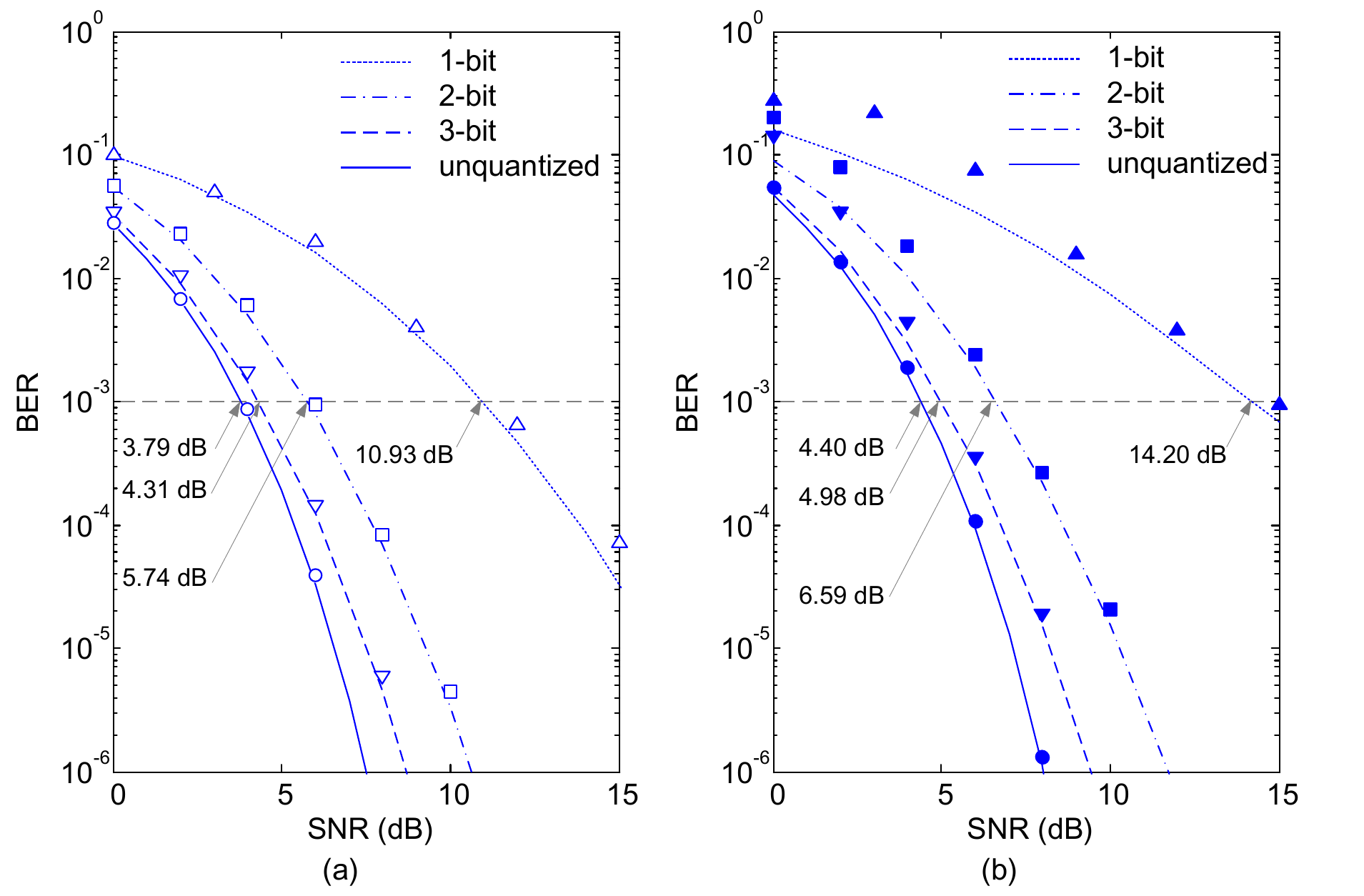} }%
\caption{BER versus SNR${=1/\sigma_{w}^2}$ for QPSK constellations. In the results, the JCD estimation scheme is used under the settings with a) perfect CSIR and b) no CSIR. Curves denote analytical results and markers denote Monte-Carlo simulation results achieved by the GAMP-based JCD algorithm.  }\label{fig:BER_QPSK}
\end{center}
\end{figure}

To verify the accuracy of our analytical results, we compare the analytical BER expression (\ref{eq:BER}) with that obtained by simulations (performed by the GAMP-based JCD algorithm) under the quantized MIMO system with QPSK constellation. The simulation results are obtained by averaging over $10,000$ channel realizations. The parameters of the system are set as follows: $K=50$, $N=200$, $T_1 = 50$, and $T_2 = 450$. The pilot sequences of length $T_1$ are randomly generated. In all the following simulations, we use the typical uniform quantizer with the quantization step-size ${\Delta = \sqrt{0.25}}$. Note that we do not optimize the quantization step-size but select a good one for general scenarios. We leave the related issue to our future work. Figure \ref{fig:BER_QPSK} shows the corresponding BERs results for the cases of 1) perfect CSIR and b) no CSIR. We observe that the analytical BER expression (\ref{eq:BER}) generally predicts well the behavior of the GAMP-based JCD algorithm. For the case with no CSIR, the GAMP-based JCD algorithm cannot work as well as that predicted by the analytical result at low SNRs. This would be because the GAMP-based JCD algorithm is only an approximation to the Bayes-optimal JCD estimator. This gap has motivated the search for other improved estimators in the future. In addition, from Figure \ref{fig:BER_QPSK}, we see that the performance degradation due to $3$-bit quantization is small. For instance, if we target the SNR to that attained by the unquantized system with perfect CSIR at BER$=10^{-3}$, the $3$-bit Bayes-optimal JCD estimator only incurs a loss of $1.19$ dB. Even with $2$-bit quantization, the loss of $2.8$ dB remains acceptable.

\begin{figure}
\begin{center}
\resizebox{3.5in}{!}{%
\includegraphics*{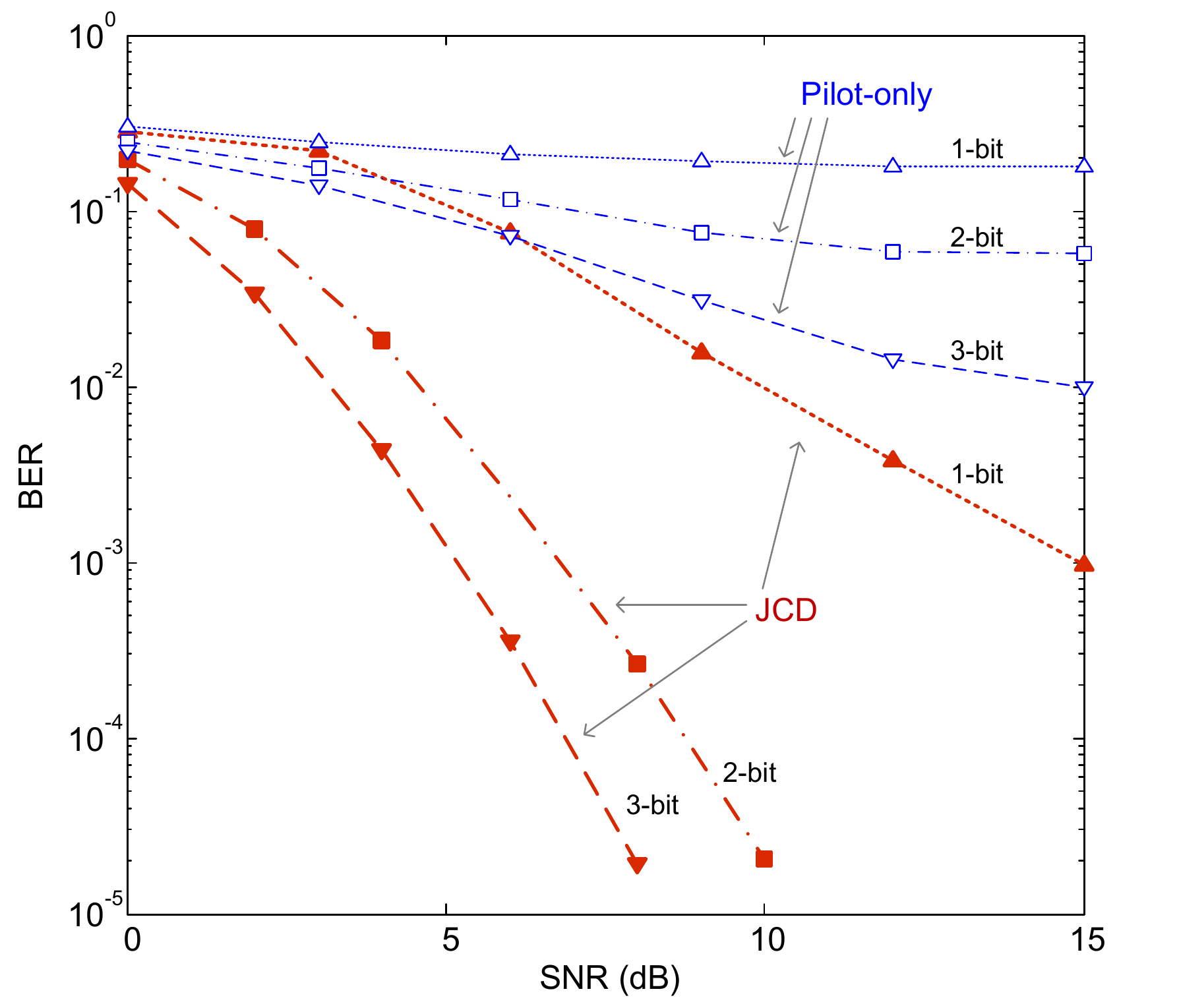} }%
\caption{Average BER versus SNR for QPSK constellations. In the results, the GAMP-based JCD algorithm and pilot-only scheme are used. Plots are based on Monte-Carlo simulation results.  }\label{fig:BER_QPSK_JCDvsPilotOnly}
\end{center}
\end{figure}

Comparing Figures \ref{fig:BER_QPSK}(a) and \ref{fig:BER_QPSK}(b), we see that the loss due to no CSIR is small for the proposed JCD estimator. Therefore, it is of interest to evaluate the improvement due to the JCD estimation. Following the same system parameters as before, Figure \ref{fig:BER_QPSK_JCDvsPilotOnly} compares the BERs under the conventional pilot-only scheme and the proposed JCD estimation scheme. For the pilot-only scheme, we adopt the receiver structure of \cite{Risi-14ArXiv}, which employs the least squares (LS) channel estimate for the quantized MIMO system. However, unlike \cite{Risi-14ArXiv}, we then employ the GAMP algorithm for payload data detection based on the estimated channel. Therefore, the BERs of the pilot-only scheme shown in Figure \ref{fig:BER_QPSK_JCDvsPilotOnly} are expected to be better than that employing suboptimal criteria \cite{Risi-14ArXiv}. Even so, as can be seen from Figure \ref{fig:BER_QPSK_JCDvsPilotOnly}, the JCD estimation still shows a large improvement over the pilot-only scheme.

\begin{figure}
\begin{center}
\resizebox{3.65in}{!}{%
\includegraphics*{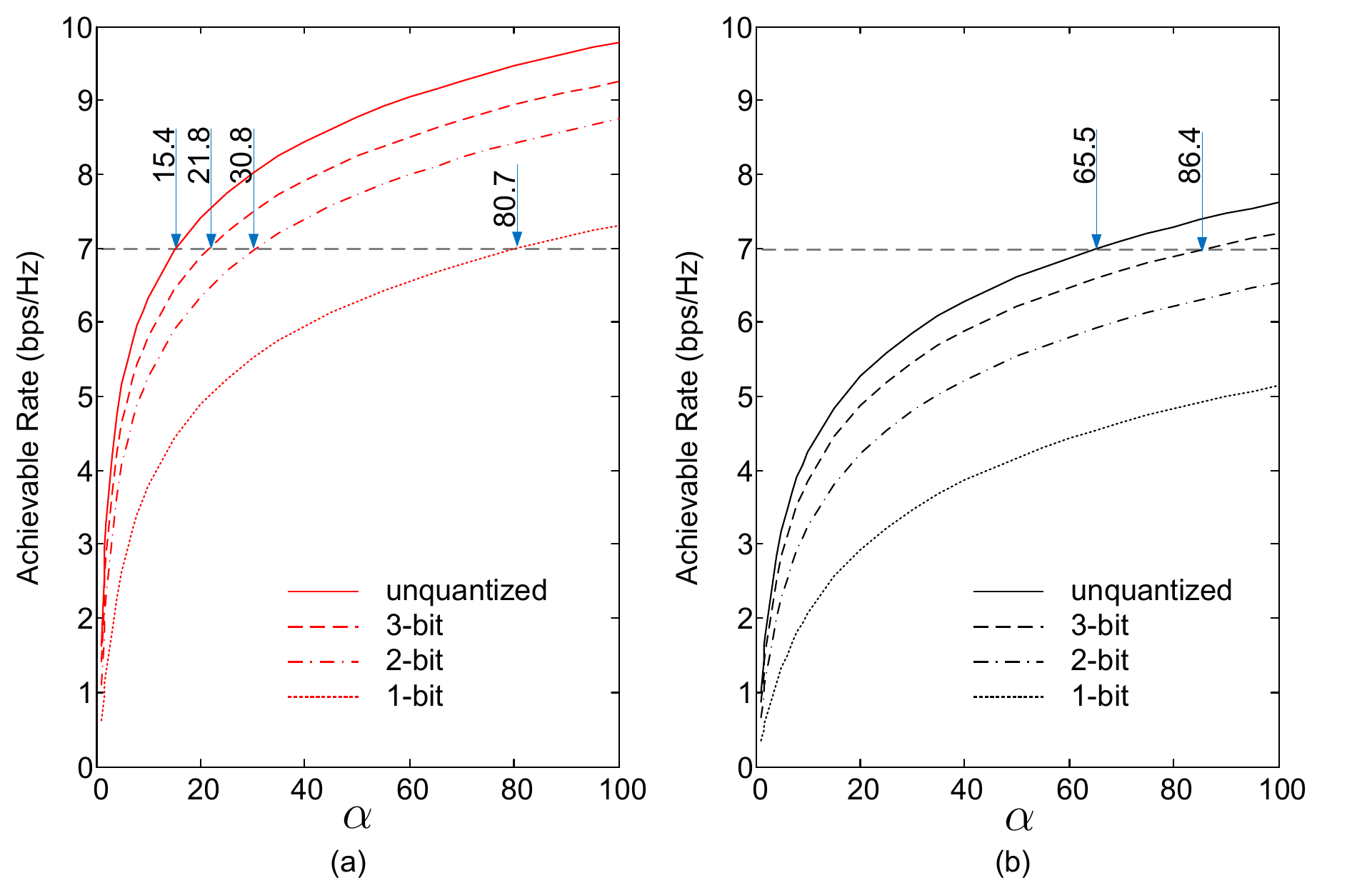} }%
\caption{The achievable rates as functions of $\alpha =N/K$ for a) the Bayes-optimal JCD estimator and b) the pilot-only scheme. $\beta = 10$, $\beta_1 = 1$, $\sigma_{w}^2 = 10^{-1}$, and $X_2^{ij} \sim \calN_{\bbC}(0,1)$.  }\label{fig:Rate_Gaussian}
\end{center}
\end{figure}

Finally, we compare the achievable rates as functions of the antenna ratio $\alpha = N/K$ for the Bayes-optimal JCD estimator and the pilot-only scheme under different quantization precisions in Figure \ref{fig:Rate_Gaussian}. Note that unlike the QPSK signals used in pervious simulations, we consider Gaussian signal, i.e., $X_2 \sim \calN_{\bbC}(0,1)$, in this experiment. We observe that the achievable rates of all the quantization precisions increase as the receive antenna numbers even for the $1$-bit receivers. This implies that the use of high-order modulation schemes is also possible in $1$-bit MIMO systems, which shares the same view as \cite{Mo-14ArXiv}. This property is quite different from the quantized SISO system, where the achievable rate is always upper bounded by $\sfB$ bits in a $\sfB$-bit receiver \cite{Singh-09TCOM}. If we fix the achievable rate to $7$ bps/Hz, the number of receive antennas for the $3$-bit Bayes-optimal JCD estimator is only about $\frac{21.8}{15.4} \approx 1.4$ times of the unquantized Bayes-optimal JCD  estimator (the benchmark receiver), while even with unquantized receivers, the number of receive antennas for the pilot-only scheme requires about $\frac{65.5}{15.4} \approx 4.3$ times of the benchmark receiver. The penalty of increasing $1.4$ times of antenna numbers with very low precision ADCs seems to be quite acceptable.

\section*{VI. Conclusion}
We have developed a framework for studying the best possible estimation performance of the quantized MIMO system. In particular, we used the Bayes-optimal inference for the JCD estimation and realized this estimation by applying the BiG-AMP technique. Additionally, the asymptotic performances (e.g., MSEs) w.r.t.~the channels and the payload data are derived in the lareg-system limit. A set of Monte-Carlo simulations was conducted to illustrate that our analytical results provide an accurate prediction for the performances of the Bayes-optimal JCD estimator. The numerical results have also revealed that the JCD estimation scheme provides tremendous improvement over the conventional pilot-only scheme.

\section*{\sc Appendix A: Proof of Proposition \ref{Pro_MSE}}
As stated in Section IV, the MSEs of interest are saddle points of the average free entropy (\ref{eq:FreeEn}). However, direct calculation is very difficult. Thus, we resort to the replica method by computing the replicate partition function $\Ex_{\wtqY}\left\{\sfP^{\tau}(\qY)\right\}$ in (\ref{eq:LimF}), which with the definition of (\ref{eq:posteriorPr}) can be expressed as
\begin{equation} \label{eq:sf_E1}
    \Ex_{\wtqY}\left\{\sfP^{\tau}(\qY)\right\}
    = \Ex_{\calqH,\calqX}\left\{\int \rmd \wtqY
    \prod_{a=0}^{\tau} {\sfP_{\sf out}\left(\wtqY \left|\qZ^{(a)}\right.\right)} \right\},
\end{equation}
where we define ${\qZ^{(a)} \triangleq \qH^{(a)} \qX^{(a)}/\sqrt{K}}$ with $\qH^{(a)}$ and $\qX^{(a)}$ being the $a$-th replica of $\qH$ and $\qX$ respectively, and the notations $\calqX \triangleq \{ \qX^{(a)}, \forall a \}$ and $\calqH \triangleq \{ \qH^{(a)}, \forall a \}$. Here, $(\qH^{(a)},\qX^{(a)})$ are random matrices taken from the distribution $(\sfP_{\sfH},\sfP_{\sfX})$ for $a=0,1, \dots, \tau$. In addition, $\int \rmd \wtqY$ denotes the integral w.r.t.~a discrete measure because the quantized output $\wtqY$ is a finite set. Next, we will focus on calculating the right-hand side of (\ref{eq:sf_E1}), which can be done by applying the techniques in \cite{Krzakala-13ISIT,Kabashima-14ArXiv,Wen-15ICC} after additional manipulations.

First, in order to average over $(\calqH,\calqX)$, we introduce two ${(\tau+1)\times(\tau+1)}$ matrices $\qQ_{H} = [Q_{H}^{ab}] $ and $\qQ_{X_{t}} =
[Q_{X_{t}}^{ab}] $ whose elements are defined by $Q_{H}^{ab} = \qh_{n}^{(b)} (\qh_{n}^{(a)})^{\dag}/K $ and $Q_{X_{t}}^{ab} = (\qx_{j}^{(a)})^{\dag}
\qx_{j}^{(b)}/K$. Here, $\qh_{n}^{(a)}$ denotes the $n$th row vector of $\qH^{(a)}$, and $\qx_{j}^{(a)}$ denotes the $j$th column vector of $\qX^{(a)}$
corresponding to phase block $t$, and $\calT_t$ for $t=1,2$ represents the set of all symbol indices in phase block $t$. The definitions of $\qQ_{H}$ and $\qQ_{X_{t}}$ are equivalent to
\begin{align*} 
1 &=  \int \prod_{n=1}^{N} \prod_{0\leq a \leq b}^{\tau}\delta\left(
 \qh_{n}^{(b)} (\qh_{n}^{(a)})^{\dag} - K Q_{H}^{ab} \right) \rmd Q_{H}^{ab}, \\
1 &= \int \prod_{t=1}^{2}\prod_{j \in \calT_t}\prod_{0\leq a \leq b}^{\tau}\delta\left(
 (\qx_{j}^{(a)})^{\dag} \qx_{j}^{(b)} - K Q_{X_{t}}^{ab} \right)  \rmd Q_{X_{c,t}}^{ab}, 
\end{align*}
where $\delta(\cdot)$ denotes Dirac's delta. Let $\calqQ_X \triangleq \{ \qQ_{X_t}, \forall t\}$ and $\calqZ  \triangleq \{ \qZ^{(a)}, \forall a\}$. Inserting the
above into (\ref{eq:sf_E1}) yields
\begin{equation}\label{eq:sf_E2}
 \Ex_{\wtqY}\{\sfP^{\tau}(\wtqY)\} = {\int e^{K^2{\cal G}^{(\tau)}}\rmd\mu_H^{(\tau)}(\qQ_H) \rmd\mu_X^{(\tau)}(\calqQ_X)},
\end{equation}
where
\begin{align*} 
{\cal G}^{(\tau)}(\calqQ_Z)&\triangleq \frac{1}{K^2} \log \Ex_{\calqZ }\left\{  \int \rmd\wtqY\prod_{a=0}^{\tau} {\sfP_{\sf out}\left(\wtqY \left| \qZ^{(a)} \right.\right)}  \right\}, \\
\mu_H^{(\tau)}(\qQ_H) &\triangleq\Ex_{\calqH}\left\{\prod_{n,a,b} \delta\left(
 \qh_{n}^{(b)} ( \qh_{n}^{(a)} )^{\dag} - K Q_{H}^{ab} \right)\right\},\\
\mu_X^{(\tau)}(\calqQ_X) &\triangleq\Ex_{\calqX}\left\{\prod_{t,j,a,b}\delta\left(
 (\qx_{j}^{(a)})^{\dag} \qx_{j}^{(b)} - K Q_{X_{t}}^{ab} \right)\right\}.
\end{align*}
Then using the Fourier representation of the $\delta$ function and computing the integrals by the saddle point method, we attain
\begin{equation}
\frac{1}{K^2}\Ex_{\wtqY}\{\sfP^{\tau}(\wtqY)\} = \Extr_{\qQ_H,\calqQ_X,\tilde{\qQ}_H,\tilde{\calqQ}_X} \{ \Phi^{(\tau)} \}
\end{equation}
with
\begin{align*} 
&\hspace{-0.2cm} \Phi^{(\tau)} \triangleq \frac{1}{K^2} \log \Ex_{\qZ}\left\{  \prod_{n,t,j \in \calT_t}  \int \rmd y_{n,j} \prod_{a} {\sfP_{\sf out}\left(y_{n,j} \left| z_{n,j}^{(a)} \right.\right)}  \right\}  \notag \\
&\hspace{-0.2cm} +\frac{1}{K^2} \log\Ex_{\calqH}\left\{\prod_{n} e^{\tr\left(\tilde\qQ_{H}\qH_{n}^{\dag}\qH_{n}\right)}\right\} -\alpha {\tr\left(\tilde\qQ_{H}\qQ_{H}\right)} \notag \\
&\hspace{-0.2cm} +\frac{1}{K^2}  \log\Ex_{\calqX}\left\{\prod_{t}e^{\tr\left(\tilde\qQ_{X_{t}}\qX_{t}^{\dag} \qX_{t}\right)}\right\} -{\sum_{t}\beta_t
 \tr\left(\tilde\qQ_{X_{t}}\qQ_{X_{t}}\right)},
\end{align*}
where $\Extr_{x}\{ f(x) \}$ represents the extreme value of $f(x)$ w.r.t.~$x$, $\tilde{\qQ}_{H} = [\tilde{\qQ}_{H}^{ab}] \in \bbC^{(\tau+1)\times(\tau+1)}$ and $\tilde{\calqQ}_X \triangleq\{ \tilde{\qQ}_{X_{t}} = [\tilde{\qQ}_{X_{t}}^{ab}] \in \bbC^{(\tau+1)\times(\tau+1)}, \forall c,t \} $. According to (\ref{eq:LimF}), the average free entropy turns out to be $\Phi = \lim_{\tau\rightarrow 0}\frac{\partial}{\partial \tau}\Extr_{\qQ_H,\calqQ_X,\tilde{\qQ}_H,\tilde{\calqQ}_X} \{ \Phi^{(\tau)} \}$.

The saddle points of $\Phi^{(\tau)}$ can be obtained by seeking the point of zero gradient w.r.t.~$\{\qQ_{H},\qQ_{X_{t}},\tilde{\qQ}_{H},\tilde{\qQ}_{X_{t}}\}$. However, in doing so, it is prohibitive to get explicit expressions about the saddle points. Therefore, we assume that the saddle points follow the RS form \cite{Nishimori-01BOOK} as $\qQ_{H} = (c_{H}-q_{H})\qI +q_{H}\qone$, $\tilde\qQ_{H} =  (\tc_{H}-\tq_{H})\qI + \tq_{H}\qone$, $\qQ_{X_{t}} =(c_{X_{t}}-q_{X_{t}})\qI + q_{X_{t}}\qone$, $\tilde\qQ_{X_{t}} = (\tc_{X_{t}}-\tq_{X_{t}})\qI + \tq_{X_{t}}\qone$, where $\qI$ denotes the identity matrix and $\qone$ denotes the all-one matrix. In addition, the application of the central limit theorem suggests that the $\qz_{n,j} \triangleq [ z_{n,j}^{(0)} \, z_{n,j}^{(1)} \cdots z_{n,j}^{(\tau)} ]^T$ are Gaussian random vectors with ${(\tau+1)\times(\tau+1)}$ covariance matrix $\qQ_{Z_t}$. If $j \in \calT_t$, then the $(a,b)$th entry of $\qQ_{Z_t}$ is given by
\begin{equation} \label{eq:defQ}
     (z_{n,j}^{(a)})^* z_{n,j}^{(b)}
     = Q_{H}^{ab} Q_{X_{t}}^{ab} \triangleq Q_{Z_t}^{ab} .
\end{equation}
Therefore, we set $\qQ_{Z_t} = (c_{H}c_{X_{t}}-q_{H}q_{X_{t}})\qI + q_{H}q_{X_{t}}\qone $, which is equivalent to introduce to the Gaussian random variable $\qz_{n,j}$ for $j \in \calT_t$ as
\begin{equation}
    z_{n,j}^{(a)} = \sqrt{ c_{H}c_{X_{t}}-q_{H}q_{X_{t}} } \,u^{(a)}
    + \sqrt{ q_{H}q_{X_{t}} }
\end{equation}
for $a=0,1, \dots \tau$, where $u^{(a)}$ and $v$ are independent standard complex Gaussian random variables.

Substituting these RS expression into $\Phi^{(\tau)}$ leads to the RS expression of $\Phi^{(\tau)}$. With the RS, we only have to determine the parameters $\{c_{H},q_{H},c_{{X_{t}}},q_{{X_{t}}},\tc_{H},\tq_{H},\tc_{{X_{t}}},\tq_{{X_{t}}} \}$, which can be obtained by equating the corresponding partial derivatives of $\Phi^{(\tau)}$ to zero. In doing so, as $\tau \rightarrow 0$, we get that $\tc_{H} =0$, $\tc_{{X_{t}}} = 0$, $c_{H} =\Ex\{|{H}|^2\}$, $c_{{X_{t}}} =\Ex\{|{X_{t}}|^2\}$, and the other parameters $\{q_{H},q_{X_{t}},\tq_{H},\tq_{X_{t}} \}$ are given by (\ref{eq:fxiedPoints}) in Proposition \ref{Pro_MSE}. Let $\mse_{H} = c_{H}-q_{H}$ and $\mse_{{X_{t}}} = c_{{X_{t}}}-q_{{X_{t}}}$. After taking these into account, we obtain the RS expression of the average free entropy as
\begin{align}
 \Phi &= \alpha \sum_{t} \beta_t \Bigg( \sum_{b} \int \rmD v\,
   \Psi_{b}\left( V_{t} \right)
   \log \Psi_{b}\left( V_{t} \right)
    \Bigg) \notag \\
  & \quad - \alpha  I(H;Z_{H}|\tq_{H}) - \sum_{t=1}^{2}  \beta_t  I(X_{t};Z_{X_{t}}|\tq_{X_{t}}) \notag \\
  & \quad + \alpha (c_{H} - q_{H})\tq_{H} + \sum_{t=1}^{2} \beta_t (c_{X_{t}}-q_{X_{t}}) \tq_{X_{t}}. \label{eq:FreeEntropyFinal}
\end{align}
where we have defined $V_{t} \triangleq \sqrt{q_{H}q_{X_{t}}} v$, $\Psi_{b}(V_{t})$ is given by (\ref{eq:Pout}), and the notation $I(X,Z|q)$ is used to denote the mutual information between $X$ and $Z$ with a Gaussian scalar channel ${Z = \sqrt{q} X + W}$ and $W \sim \calN_{\bbC}(0, 1)$. Note that the parameters $\{q_{H},q_{X_{t}},\tq_{H},\tq_{X_{t}} \}$ given by (\ref{eq:fxiedPoints}) are  saddle points of (\ref{eq:FreeEntropyFinal}).

{\renewcommand{\baselinestretch}{1.1} 
\bibliographystyle{IEEEtran}

\end{document}